\newcommand{\eqdef}{\coloneqq}
\newcommand{\ie}{\emph{i.e.}}
\newcommand{\eg}{\emph{e.g.}}
\newcommand{\HAo}{\mathsf{HA}^\omega}
\newcommand{\T}{\mathsf{T}}
\newcommand{\z}{0}
\newcommand{\suc}{\mathsf{suc}}
\newcommand{\rec}{\mathsf{rec}}
\newcommand{\N}{\mathbb{N}}
\newcommand{\Two}{\mathbf{2}}
\newcommand{\Baire}{\N^\N}
\newcommand{\Cantor}{\Two^\N}
\newcommand{\ke}{\mathrm{ke}}
\newcommand{\rV}{\mathrm{V}}
\newcommand{\rM}{\mathrm{M}}
\newcommand{\rB}{\mathrm{B}}
\newcommand{\rS}{\mathrm{S}}
\newcommand{\fst}{\mathsf{pr}_1}
\newcommand{\snd}{\mathsf{pr}_2}
\newcommand{\inj}{\mathsf{inj}}
\newcommand{\inl}{\mathsf{inj}_1}
\newcommand{\inr}{\mathsf{inj}_2}
\newcommand{\pr}{\mathsf{pr}}
\newcommand{\tpair}{\mathsf{pair}}
\newcommand{\pair}[2]{\left\langle#1,\,#2\right\rangle}
\newcommand{\case}{\mathsf{case}}
\newcommand{\J}{\mathsf{J}}
\newcommand{\JN}{\J\N}
\newcommand{\triple}[3]{\left\langle#1,\,#2,\,#3\right\rangle}
\newcommand{\Pair}[2]{\left\langle#1,\, #2\right\rangle}
\newcommand{\cto}{\hspace{1pt}\mathord{\to}\hspace{1pt}}
\newcommand{\R}[1]{\mathrel{\mathrm{R}_{#1}}}
\newcommand{\maj}[1]{\mathrel{\triangleleft_{#1}}}
\newcommand{\E}[2]{\mathrel{\mathrm{R}^{#1}_{#2}}}
\newcommand{\nil}{\mathsf{nil}}
\newcommand{\Ge}{\mathrm{G}}
\newcommand{\ko}[1]{\langle #1 \rangle}
\newcommand{\ku}[1]{[#1]}
\title{A Gentzen-style monadic translation of G\"odel's System T}
\titlerunning{A Gentzen-style monadic translation} 
\author{Chuangjie Xu}{Ludwig-Maximilians-Universit\"at M\"unchen, Germany \and \url{http://cj-xu.github.io/}}{cj-xu@outlook.com}{https://orcid.org/0000-0001-6838-4221}{Alexander von Humboldt Foundation and LMUexcellent initiative}
\authorrunning{C.~Xu}
\keywords{monadic translation, G\"odel's System T, logical relation, negative translation, majorizability, continuity, bar recursion, Agda}
\begin{document}

\maketitle

\begin{abstract}
We introduce a syntactic translation of G\"odel's System~$\T$ parametrized by a weak notion of a monad, and prove a corresponding fundamental theorem of logical relation. Our translation structurally corresponds to Gentzen's negative translation of classical logic. By instantiating the monad and the logical relation, we reveal the well-known properties and structures of $\T$-definable functionals including majorizability, continuity and bar recursion. Our development has been formalized in the Agda proof assistant.
\end{abstract}

\section{Introduction}

Via a syntactic translation of G\"odel's System~$\T$, Oliva and Steila~\cite{OS:bar} construct functionals of bar recursion whose terminating functional is given by a closed term in System~$\T$. The author adapts their method to compute moduli of (uniform) continuity of functions $(\N \to \N) \to \N$ that are definable in System~$\T$~\cite{xu:tcont}. Inspired by the generalizations of negative translations which replace double negation by an arbitrary nucleus~\cite{EO:peirce,ishihara:GG,vdb:kuroda}, we introduce a monadic translation of System~$\T$ into itself which unifies those in~\cite{OS:bar,xu:tcont}. This monadic translation structurally corresponds to Gentzen's negative translation.

Our translation is parametrized by a monad-like structure, which we call a nucleus, but without the restriction of satisfying the monad laws. We adopt the standard technique of logical relations to show the soundness of the translation in the sense that each term of~$\T$ is related to its translation. Because the translation is parametrized by a nucleus, we have to assume that the logical relation holds for the nucleus. Such a soundness theorem is an instance of the \emph{fundamental theorem of logical relation}~\cite{statman:logical} stating that if a logical relation holds for all constants then so does it for all terms.

Monadic translations have been widely used for assigning semantics to impure languages. Our goal is instead to reveal properties enjoyed by terms of~$\T$ and to extract witnesses of these properties. For this purpose, the nuclei we work with are not extensions of $\T$, but just simple structures given by types and terms of $\T$, so that the translation remains in~$\T$ and the extracted witnesses are terms of~$\T$. The Gentzen-style translation looks simpler than the other variants~\cite{powell:ccht,uustalu:monad}. But we demonstrate its power and elegance via its various applications including majorizability, (uniform) continuity and bar recursion. Of course these properties of $\T$-definable functionals are well known~\cite{beeson:foundations,escardo:dialogue,howard:majorizable,kohlenbach:majorization,OS:bar,powell:ccht,xu:tcont}. The main contribution of the paper, however, is in obtaining these results in a single framework simply by choosing a suitable nucleus that satisfies the logical relation for the target property.

All the results in the paper are formalized in the Agda proof assistant~\cite{agda:wiki}, except the introductory section on negative translations of predicate logic. There are some differences in the Agda development. Firstly, it works with de Bruijn indices when representing the syntax of~$\T$ to avoid handling variable names. Moreover, all the logical relations are defined between the Agda (or type-theoretic) interpretations of $\T$-types. What we have proved in Agda is that the interpretation of any $\T$-term is related to the one of its translation. In this way, we avoid dealing with the computation rules of~$\T$ because they all hold judgmentally in the Agda interpretation. The Agda development is available at the author's GitHub page to which the link is given above the introduction.

\subsection{Proof-theoretic translations}
\label{sec:neg-trans}

Recall that Gentzen's translation\footnote{Nowadays it is known as the G\"odel-Gentzen negative translation. G\"odel's translation places a double negation also in front of the clause for implication, which makes it different from Gentzen's one in affine logic~\cite{AO:affine}.} simply places a double negation in front of atomic formulas, disjunctions and existential quantifiers~\cite{TS:book}. One can replace double negation with a \emph{nucleus}, that is, an endofunction $j$ on formulas such that for any formulas $A,B$ the following statements are provable:
\[
A \to jA
\qquad
(A \to jB) \to jA \to jB
\qquad
(jA)[t/x] \leftrightarrow j(A[t/x]).
\]
Nuclei are also known as \emph{lax modalities}~\cite{aczel:modality} and \emph{strong monad}~\cite{EO:peirce}. But in this paper we adopt the terminology and definition from~\cite{vdb:kuroda} which brought the technical motivation to this work. Each nucleus determines a proof-theoretic translation of intuitionistic predicate logic $\mathrm{IQL}$ into itself, consisting of a formula translation $A \mapsto A^\Ge_j$ defined as follows
\smallskip
\[
 \begin{aligned}
 (A \to B)^\Ge_j & \eqdef A^\Ge_j \to B^\Ge_j &&\quad&
 P^\Ge_j & \eqdef j P \qquad \text{for primitive $P$} \\
 (A \wedge B)^\Ge_j & \eqdef A^\Ge_j \wedge B^\Ge_j &&&
 (A \vee B)^\Ge_j & \eqdef j (A^\Ge_j \vee B^\Ge_j) \\
 (\forall x A)^\Ge_j & \eqdef \forall x A^\Ge_j &&&
 (\exists x A)^\Ge_j & \eqdef j \exists x A^\Ge_j
 \end{aligned}
\]
and a soundness theorem stating that $\mathrm{IQL} \vdash A$ implies $\mathrm{IQL} \vdash A^\Ge_j$. Working with different nuclei, one embed a logic system into another:
\begin{itemize}
\setlength\itemsep{1pt}
\item if $jA= (A \to \bot) \to \bot$, then $\mathrm{CQL}\vdash A$ implies $\mathrm{MQL} \vdash A^\Ge_j$;
\item if $jA=(A \to R) \to R$ for some predicate variable $R$, then $\mathrm{CQL}\vdash A$ implies $\mathrm{IQL} \vdash A^\Ge_j$;
\item if $jA=A\vee \bot$, then $\mathrm{IQL}\vdash A$ implies $\mathrm{MQL} \vdash A^\Ge_j$;
\end{itemize}
where $\mathrm{CQL}$ stands for classical predicate logic and $\mathrm{MQL}$ for minimal predicate logic. These results are well-known (see \eg~\cite{ishihara:GG,vdb:kuroda}) and various instances of the translation have been applied in term extraction (see \eg~\cite{EO:peirce,kohlenbach:book})

Under the viewpoint of the proofs-as-programs correspondence, our translation of G\"odel's System~$\T$ presented in Section~\ref{sec:trans} is exactly a term/program version of the above proof-theoretic translation on minimal propositional logic.

\subsection{G\"odel's System T}
\label{sec:T}

Recall that the term language of G\"odel's System $\T$ can be given by the following grammar
\[
\begin{aligned}
& \text{Type} & \sigma,\tau & \Coloneqq \N \mid \sigma\to\tau \\
& \text{Term} & t,u  & \Coloneqq x \mid \lambda x^\sigma.t \mid t u \mid \z \mid \suc \mid \rec_\sigma
\end{aligned}
\]
where $\N$ is the base type of natural numbers and $\sigma \to \tau$ the type of functions from $\sigma$ to $\tau$. A typing judgment takes the form $\Gamma \vdash t : \tau$, where $\Gamma$ is a context (\ie\ a list of distinct typed variables $x:\sigma$), $t$ is a term and $\tau$ is a type. Here are the typing rules:
\begin{center}
\AxiomC{$\Gamma,x:\sigma \vdash x : \sigma$}
\DisplayProof
\qquad
\AxiomC{$\Gamma, x:\sigma \vdash t:\tau$}
\UnaryInfC{$\Gamma \vdash \lambda x^\sigma.t : \sigma \to \tau$}
\DisplayProof
\qquad
\AxiomC{$\Gamma \vdash t : \sigma \to \tau$}
\AxiomC{$\Gamma \vdash u : \sigma$}
\BinaryInfC{$\Gamma \vdash tu:\tau$}
\DisplayProof
\end{center}
\begin{center}
\AxiomC{$\Gamma \vdash \z : \N$}
\DisplayProof
\qquad
\AxiomC{$\Gamma \vdash \suc : \N \to \N$}
\DisplayProof
\qquad
\AxiomC{$\Gamma \vdash \rec_\sigma : \sigma \to (\N \to \sigma \to \sigma) \to \N \to \sigma$}
\DisplayProof
\end{center}
We call $\Gamma \vdash t : \tau$ a \emph{well-typed} term if it is derivable.
We may omit the context $\Gamma$ and simply write $t:\tau$ or $t^\tau$ if it is unambiguous. When mentioning terms of~$\T$ in the paper, we refer to only the well-typed ones. We often omit superscript and subscript types if they can be easily inferred, and may write:
\begin{itemize}
\item $\lambda x_1 x_2 \cdots x_n. t$ instead of $\lambda x_1. \lambda x_2. \cdots \lambda x_n. t$,
\item $f(a_1,a_2,\cdots,a_n)$ instead of $(((fa_1)a_2)\cdots)a_n$,
\item $n+1$ instead of $\suc\, n$,
\item $\tau^\sigma$ instead of $\sigma \to \tau$, and
\item $f \circ g$ instead of $\lambda x.f(gx)$.
\end{itemize}
Using the primitive recursor, we can for instance define the function $\max:\N \to \N \to \N$ that returns the greater argument as follows:
\[
\max \eqdef \rec_{\N \to \N} (\lambda n^\N.n , \, \lambda n^\N f^{\N \to \N}. \rec_\N (\suc \, n, \, \lambda m^\N g^{\N \to \N}. \suc (fm))).
\]
One can easily verify that the usual defining equations of $\max$
\[
\max(\z,n) = n
\qquad
\max(m,\z) = m
\qquad
\max(\suc \, m, \suc \, n) = \suc (\max (m , n))
\]
hold using the computation rules of $\rec$
\[
\rec_\sigma(a,f,\z)=a
\qquad
\rec_\sigma(a,f,\suc\,n) = f(n,\rec_\sigma(a,f,n))
\]
where $a:\sigma$ and $f:\N \to \sigma \to \sigma$. For the ease of understanding, we will use defining equations rather than $\T$-terms involving $\rec$ in the paper.



\section{A monadic translation of System T}
\label{sec:trans}

Our syntactic translation of System~$\T$ is parametrized by a nucleus, that is, a monad-like structure without the restriction of satisfying the monad laws.

\begin{definition}[nuclei]
A \emph{nucleus} relative to~$\T$ is a triple $(\J\N,\eta,\kappa)$ consisting of a type~$\J\N$ and two terms
\[
\eta : \N \to \J\N
\qquad
\kappa : (\N \to \JN) \to \JN \to \JN
\]
of System~$\T$.
\end{definition}

Note that a nucleus is \emph{not} an extension of~$\T$, but instead a simple structure given by a type and two terms of~$\T$. Therefore, our translation of any term of~$\T$ remains in~$\T$ rather than some monadic metatheory such as in~\cite{powell:ccht,uustalu:monad}. The simplest example is the identity nucleus where $\J\N$ is just $\N$ and $\eta,\kappa$ are the identity functions of suitable types. More examples are available in Section~\ref{sec:app}. Though the first component of a nucleus is just a type, we denote it as $\J\N$ because in the generalized notion of a nucleus discussed in Section~\ref{sec:last} it will be a map $\J$ on all the types of~$\T$.

We are now ready to construct a syntactic translation of $\T$ into itself:
\begin{definition}[$\J$-translation]
\label{def:trans}
Given a nucleus $(\J\N,\eta,\kappa)$, we assign to each type $\rho$ of~$\T$ a type $\rho^\J$ as follows:
\[
\begin{aligned}
\N^\J & \eqdef \J\N \\
(\sigma \to \tau)^\J & \eqdef \sigma^\J \to \tau^\J.
\end{aligned}
\]
Each term $\Gamma \vdash t : \rho$ is translated to a term $\Gamma^\J \vdash t^\J : \rho^\J$, where $\Gamma^\J$ is a new context assigning each $x:\sigma \in \Gamma$ to a fresh variable $x^\J : \sigma^\J$, and $t^\J$ is translated inductively as follows:
\[
\begin{aligned}
(x)^\J & \eqdef x^\J & \quad & &
\z^\J & \eqdef \eta \hspace{1pt} \z \\
(\lambda x.t)^\J & \eqdef \lambda x^\J.t^\J & & &
\suc^\J & \eqdef \kappa (\eta \circ \suc) \\
(tu)^\J & \eqdef t^\J u^\J & & &
(\rec_\sigma)^\J & \eqdef \lambda x^{\sigma^\J} f^{\J\N \to \sigma^\J \to \sigma^\J} . \hspace{1pt} \ke_\sigma \left( \rec_{\sigma^\J} \left( x , f \circ \eta \right) \right)
\end{aligned}
\]
where $\ke_\sigma : (\N \to \sigma^\J) \to \J\N \to \sigma^\J$ is an extension of~$\kappa$ defined inductively on~$\sigma$:
\[
\begin{aligned}
\ke_\N & \eqdef \kappa \\
\ke_{\sigma \to \tau} & \eqdef \lambda g^{\N \to \sigma^\J \to \tau^\J} a^{\J\N} x^{\sigma^\J}. \hspace{1pt} \ke_\tau \left( \lambda n^\N. g(n,x),a \right).
\end{aligned}
\]
We often write $\J$ to denote the nucleus $(\JN,\eta,\kappa)$ and call the above the $\J$-\emph{translation} of $\T$.
\end{definition}

Thanks to the inductive translation of function types into function types, the translation of the simply-typed-$\lambda$-calculus fragment of~$\T$ is straightforward. There is no need of introducing a nonstandard, monadic notion of function application which plays an essential role in the other monadic translations~\cite{powell:ccht,uustalu:monad} as discussed in Section~\ref{sec:last}.

The more interesting part is the translation of the constants. Viewing $\eta$ as a \emph{unit} operator and $\kappa$ as a \emph{bind} operator in a monad~$\J$ on~$\N$ may reveal some intuition behind the translation of $\z$ and $\suc$: It is natural to expect ${\underline{n}}^\J = \eta \underline{n}$ for each numeral $\underline{n} \eqdef \suc^n(\z)$. This is indeed the case if the monad laws are satisfied, because $\kappa(\eta \circ -) : (\N \to \N) \to \J\N \to \J\N$ which is used to translate $\suc$ recovers exactly the `functoriality' of~$\J$. It is also natural to expect $(\rec_\sigma)^\J$ to preserve the computation rules, \ie
\[
(\rec_\sigma)^\J(x,f,\z^\J) = x
\qquad
(\rec_\sigma)^\J(x,f,(\suc\, n)^\J) = f(n^\J, (\rec_\sigma)^\J(x,f,n^\J)).
\]
A promising candidate of such $(\rec_\sigma)^\J(x,f) : \J\N \to \sigma^\J$ is $\rec_{\sigma^\J}( x , f \circ \eta) : \N \to \sigma^\J$. Hence, we extend $\kappa$ to $\ke_\sigma : (\N \to \sigma^\J) \to \J\N \to \sigma^\J$ to complete the translation of $\rec_\sigma$.

We adopt the standard technique of logical relations to show that the above translation is sound in the sense that each term of~$\T$ is related to its translation\footnote{We owe the idea of proving a unified theorem of logical relation to Thomas Powell.}. Because the translation is parametrized by a nucleus, we have to assume that the logical relation holds for the nucleus.

\begin{theorem}[Fundamental Theorem of Logical Relation]
\label{thm:LR}
Let $(\J\N,\eta,\kappa)$ be a nucleus. Given a binary relation ${\R{\N}} \subseteq \N \times \J\N$ between terms of $\T$, we extend it to ${\R{\rho}} \subseteq \rho \times \rho^\J$ for arbitrary type $\rho$ of $\T$ by defining
\[
f \R{\sigma\to\tau} g \eqdef \forall x^{\sigma} a^{\sigma^\J} \left( x \R{\sigma} a \to fx \R{\tau} ga \right).
\]
If ${\R{\N}}$ satisfies
\[
\label{eq:LR}\tag{$\dagger$}
\forall n^\N \left( n \R{\N} \eta n \right)
\qquad
\text{and}
\qquad
\forall f^{\N\to\N} g^{\N\to\JN} \left( \forall n^\N \left(fn \R{\N} gn\right) \to f \R{\N\to\N} \kappa g \right)
\]
then $t \R\rho t^\J$ for any closed term $t:\rho$ of $\T$.
\end{theorem}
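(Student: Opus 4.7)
The plan is to strengthen the statement to open terms with related substitutions and then proceed by induction on typing derivations. Call two substitutions $\gamma, \gamma^\J$ sending each $x:\sigma \in \Gamma$ to closed terms of types $\sigma$ and $\sigma^\J$ respectively \emph{related} if $\gamma(x) \R{\sigma} \gamma^\J(x^\J)$ for every $x \in \Gamma$. I will show by induction on the derivation of $\Gamma \vdash t : \rho$ that $\gamma(t) \R{\rho} \gamma^\J(t^\J)$ holds for every such pair; the theorem then follows by taking empty substitutions. The variable, abstraction and application cases are forced by the definition of $\R{\sigma\to\tau}$ and are routine, so the real work lies in the three constants.

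For $\z$, the required $\z \R{\N} \eta\z$ is the first half of $(\dagger)$ instantiated at $\z$. For $\suc$, the second half of $(\dagger)$ reduces $\suc \R{\N\to\N} \kappa(\eta\circ\suc)$ to showing $\suc n \R{\N} \eta(\suc n)$ for every $n$, which is again the first half of $(\dagger)$.

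For $\rec_\sigma$ I will first prove, by induction on $\sigma$, a lifting of the second clause of $(\dagger)$ to all types: whenever $h:\N\to\sigma$ and $g:\N\to\sigma^\J$ satisfy $h\hspace{1pt}n \R{\sigma} g\hspace{1pt}n$ for all $n$, then $h\hspace{1pt}n \R{\sigma} \ke_\sigma(g, a)$ for every $n, a$ with $n \R{\N} a$. The base case $\sigma = \N$ is exactly the second half of $(\dagger)$; the step case $\sigma = \sigma_1 \to \sigma_2$ follows directly from the inductive definition of $\ke_{\sigma_1\to\sigma_2}$ together with the induction hypothesis at the codomain. With this lemma in hand, the $\rec_\sigma$ case reduces to showing that for any $x \R{\sigma} x'$, $f \R{\N\to\sigma\to\sigma} f'$ and every $m:\N$, the iterates $h(m) \eqdef \rec_\sigma(x,f,m)$ and $g(m) \eqdef \rec_{\sigma^\J}(x', f'\circ\eta, m)$ are $\R{\sigma}$-related; the lemma then delivers $\rec_\sigma(x,f,n) \R{\sigma} \ke_\sigma(g, a) = (\rec_\sigma)^\J(x',f',a)$ whenever $n \R{\N} a$. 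The relatedness of the iterates is itself a straightforward inner induction on $m$: the base case uses $x \R{\sigma} x'$ directly, while the step case uses the first half of $(\dagger)$ to supply $m \R{\N} \eta\hspace{1pt}m$ so that the hypothesis on $f$ applies to give $f(m, h(m)) \R{\sigma} f'(\eta\hspace{1pt}m, g(m))$.

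The main technical obstacle is identifying the correct statement of the auxiliary lemma on $\ke_\sigma$: once it is phrased so that its conclusion lives in the logical relation at $\sigma$ rather than in some more elaborate arrow type, the recursion case collapses into the standard induction on the natural number argument. Everything else is bookkeeping about substitutions and the product-of-arrows shape of the logical relation on function types.
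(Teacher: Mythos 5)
Your proposal is correct and takes essentially the same route as the paper's proof: the same strengthening to open terms under related environments, the same dispatch of $\z$ and $\suc$ via the two halves of $(\dagger)$, and the same two-step decomposition of the $\rec_\sigma$ case --- a lemma, proved by induction on $\sigma$, that $\ke_\sigma$ preserves the logical relation, combined with an inner induction on the numeral argument showing the iterates are related. Your auxiliary lemma is exactly the paper's claim~(1) stated in unfolded form ($h \R{\N\to\sigma} \ke_\sigma(g)$ spelled out pointwise), so there is no substantive difference.
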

\begin{proof}
We prove a more general statement that
\begin{quote}
for any term $\Gamma \vdash t : \rho$ of $\T$, if $\Gamma \R{} \Gamma^\J$ then $t \R{\rho} t^\J$
\end{quote}
where $(x_1:\sigma_1, \ldots, x_n:\sigma_n) \R{} (x_1^\J:\sigma_1^\J, \ldots, x_n^\J:\sigma_n^\J)$ stands for $x_1 \R{\sigma_1} x_1^\J \wedge \ldots \wedge x_n \R{\sigma_n} x_n^\J$, by structural induction over~$t$.
\begin{itemize}
\item $t=x$. By the assumption $\Gamma \R{} \Gamma^\J$.
\item $t=\lambda x.u$. Assume $\Gamma \R{} \Gamma^\J$ and $x \R{\sigma} x^\J$. We have $u \R{} u^\J$ by induction hypothesis.
\item $t=uv$. By induction hypothesis we have $u \R{\sigma \to \tau} u^\J$ and $v \R{\sigma} v^\J$. Then, by the definition of $\R{\sigma \to \tau}$, we have $uv \R{\tau} u^\J v^\J$.
\item $t=\z$. By the assumption (\ref{eq:LR}) of $\eta$.
\item $t=\suc$. By the assumption (\ref{eq:LR}) of $\eta$, we have $\suc(n) \R{\N} \eta(\suc(n))$ for all $n:\N$. Then by the assumption (\ref{eq:LR}) of $\kappa$, we have $\suc \R{\N} \kappa(\eta \circ \suc)$.
\item $t=\rec$. We prove $\rec \R{} \rec^\J$ with the following claims:
 \begin{enumerate}
 \item For any type $\sigma$ of $\T$, the term $\ke_\sigma$ preserves the logical relation in the following sense:
 \[
 \forall f^{\N \to \sigma} g^{\N \to \sigma^\J} \left( \forall n^\N \left (fn \R{\sigma} gn\right) \to f \R{\N\to\sigma} \ke_\sigma(g) \right).
 \]
 \textsf{Proof.}
 By induction on $\sigma$.
 \qed
 \item For any $x^\sigma$ and $y^{\sigma^\J}$ with $x \R{} y$, and any $f^{\N \to \sigma \to \sigma}$ and $g^{\JN \to \sigma^\J \to \sigma^\J}$ with $f \R{} g$,
 \[
 \forall n^\N \left( \rec_{\sigma}(x,f,n) \R{\sigma} \rec_{\sigma^\J}(y,g\circ\eta,n) \right).
 \]
 \textsf{Proof.}
 By induction on $n$.
 \qed
 \end{enumerate}
 We get a proof of $\rec \R{} \rec^\J$ simply by applying (1) to (2). \qedhere
\end{itemize}
\end{proof}

\begin{remark}
The above proof can be carried out in the intuitionistic Heyting arithmetic in finite types $\HAo$~\cite{TvD:constructivism}, with the theorem formulated as
\begin{quote}
if $\HAo$ proves (\ref{eq:LR}) then, for each closed term $t$ of $\T$, $\HAo$ proves $t \R{} t^\J$.
\end{quote}
So are all the results in Section~\ref{sec:app}. Hence, the verification system here can be~$\HAo$. We leave it unspecified in the theorem for several reasons. Firstly, we hope to study other properties whose verification may require a stronger system as in~\cite{powell:ccht}. Moreover, what we have proved in the Agda formalization is a version of the theorem for the Agda embedding of System $\T$, namely that the Agda interpretations of any $\T$-term and its translation are related. But that is only an implementation choice as explained in the introduction.
\end{remark}

\section{Applications of the monadic translation}
\label{sec:app}
We now apply the above framework to reveal various properties and structures of $\T$-definable functions including majorizability, (uniform) continuity and bar recursion. Each example consists of an algorithm to construct the desired structure given by the monadic translation, and a correctness proof of the algorithm given by the fundamental theorem of logical relation.  For this, one only needs to choose a suitable nucleus that satisfies the logical relation for the target property.

\subsection{Majorizability}
\label{sec:maj}
Our first application is to recover Howard's majorizability proof of System~$\T$~\cite{howard:majorizable}. Majorizability plays an important role in models of higher-order calculi and more recently in the proof mining program~\cite{kohlenbach:book}. Howard's majorizability relation extends the usual ordering $\leq$ on natural numbers to the one~$\maj{\rho}$ on functionals of arbitrary finite type $\rho$ in the same way as in Theorem~\ref{thm:LR}. Specifically $\maj{\rho}$ is defined inductively on $\rho$ as follows:
\[
\begin{aligned}
n \maj{\N} m & \eqdef n \leq m \\
f \maj{\sigma \to \tau} g & \eqdef \forall x^{\sigma} y^{\sigma} \left( x \maj{\sigma} y \to fx \maj{\tau} gy \right).
\end{aligned}
\]
We say $t$ is \emph{majorized} by $u$ if $t \maj{} u$, and call $u$ a \emph{majorant} of $t$. Howard shows that each closed term of T is majorized by some closed term of T, which fits perfectly into our framework: Let us take $\J\N = \N$ and define $\eta : \N \to \N$ and $\kappa : (\N \to \N) \to \N \to \N$ by
\[
\eta(n) \eqdef n
\qquad\quad
\begin{aligned}
\kappa(g,0) & \eqdef g(0) \\
\kappa(g,n+1) & \eqdef \max \left( \kappa(g,n),g(n+1) \right).
\end{aligned}
\]
The $\max$ function can be defined in $\T$ using $\rec$ as shown in Section~\ref{sec:T}, and thus so is~$\kappa$. Intuitively $\kappa(g,n)$ is the maximum of the values $g0,g1,\ldots,gn$. Therefore, it satisfies the following property:

\begin{lemma}
\label{lm:maj}
For any $g : \N \to \N$, we have $gm \leq \kappa (g,n)$ whenever $m \leq n$.
\end{lemma}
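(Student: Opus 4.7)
The plan is to prove the lemma by a straightforward induction on $n$, relying only on the defining equations of $\kappa$ and the elementary monotonicity properties of $\max$ (namely that $a \leq \max(a,b)$ and $b \leq \max(a,b)$, both provable in $\T$ from the definition of $\max$ given earlier).

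For the base case $n = 0$, the hypothesis $m \leq 0$ forces $m = 0$, so $gm = g0 = \kappa(g,0)$ by the first defining equation of $\kappa$, and the inequality holds trivially. For the induction step, suppose the claim holds at $n$ and assume $m \leq n+1$. I would split on whether $m \leq n$ or $m = n+1$ (this case split on natural numbers is unproblematic and definable in $\T$). In the first subcase, the induction hypothesis gives $gm \leq \kappa(g,n)$, and since $\kappa(g,n+1) = \max(\kappa(g,n), g(n+1)) \geq \kappa(g,n)$, transitivity yields $gm \leq \kappa(g,n+1)$. In the second subcase, $gm = g(n+1) \leq \max(\kappa(g,n), g(n+1)) = \kappa(g,n+1)$ directly from the right-projection property of $\max$.

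There is essentially no obstacle here; the proof is a routine induction. The only minor subtlety worth flagging is the case split $m \leq n \vee m = n+1$ from the hypothesis $m \leq n+1$, but this is decidable on $\N$ and causes no trouble. The lemma will then feed into the next step, which is to verify that the nucleus $(\N, \eta, \kappa)$ satisfies the logical relation $(\dagger)$ of Theorem~\ref{thm:LR} for the majorizability relation $\maj{\N}$, thereby recovering Howard's result via the Fundamental Theorem.
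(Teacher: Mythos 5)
Your proof is correct and is essentially identical to the paper's: both proceed by induction on $n$, handle the base case by noting $m$ must be $0$, and split the step case into $m = n+1$ (right projection of $\max$) versus $m \leq n$ (induction hypothesis plus left projection and transitivity). No further comment is needed.
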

\begin{proof}
By induction on $n$.
If $n=0$, we are done because $m$ has to be $0$. If $m \leq n+1$, we have two cases: (i)~If $m=n+1$, then $g(n+1) \leq \kappa(g,n+1)$ by definition. (ii)~If $m \leq n$, then $g(m) \leq \kappa(g,n) \leq \kappa(g,n+1)$ by induction hypothesis and definition.
\end{proof}

\begin{corollary}
Each closed term $t : \rho$ of $\T$ is majorized by its translation $t^\J$.
\end{corollary}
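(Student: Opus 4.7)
The plan is to apply the Fundamental Theorem of Logical Relation (Theorem~\ref{thm:LR}) directly, instantiating the binary relation ${\R{\N}}$ with Howard's majorizability at base type, i.e.\ the usual order $\leq$ on $\N$. Since for the nucleus chosen here $\J\N = \N$, this instantiation is well-typed, and the inductive extension of $\R{}$ to higher types given in the statement of Theorem~\ref{thm:LR} coincides clause-by-clause with Howard's definition of $\maj{\rho}$. Hence the conclusion $t \R{\rho} t^\J$ provided by the theorem is literally the desired $t \maj{\rho} t^\J$.

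All that remains is to check the two hypotheses (\ref{eq:LR}) for this choice of $(\eta,\kappa)$. The first, $\forall n^\N (n \leq \eta n)$, is immediate because $\eta$ is the identity. For the second, assume $f^{\N\to\N}$ and $g^{\N\to\N}$ satisfy $fn \leq gn$ for every $n$; I must show $f \maj{\N\to\N} \kappa g$, that is, $fm \leq \kappa(g,n)$ whenever $m \leq n$. Chaining $fm \leq gm$ (by hypothesis) with $gm \leq \kappa(g,n)$ (which is exactly Lemma~\ref{lm:maj}) yields this.

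With both hypotheses verified, Theorem~\ref{thm:LR} applies and delivers $t \maj{\rho} t^\J$ for every closed $t:\rho$, proving the corollary. There is no real obstacle here: all the nontrivial combinatorial content has already been packaged into Lemma~\ref{lm:maj} and the soundness theorem, and the corollary is essentially a matter of observing that the two definitions of a logical-relation-style lifting agree.
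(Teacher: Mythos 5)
Your proposal is correct and follows the paper's own argument essentially verbatim: both verify condition (\ref{eq:LR}) for this nucleus, the first half by reflexivity of $\leq$ (since $\eta$ is the identity) and the second by chaining the pointwise hypothesis $fm \leq gm$ with Lemma~\ref{lm:maj} via transitivity, then invoke Theorem~\ref{thm:LR}. Nothing is missing.
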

\begin{proof}
We only need to check that the two conditions (\ref{eq:LR}) are fulfilled. The first one holds because the ordering $\leq$ is reflexive. For the second, let us assume $\forall n \left( fn \leq gn \right)$ and $n \leq m$. We have $gn \leq \kappa(g,m)$ by Lemma~\ref{lm:maj}, and thus $fn \leq \kappa(g,m)$ by the transitivity of $\leq$.
\end{proof}

We draw the reader's attention to this simple example also because the nucleus defined above does \emph{not} satisfy the monad laws: Here $\eta$ is the identity function on $\N$. If the left-identity law $\kappa(f,\eta x) = fx$ holds, then $\kappa$ has to be the identity function on $\Baire$, which is not the case.

\subsection{Lifting to higher-order functionals}
\label{sec:lift}
In the previous example, we extend a relation on natural numbers to arbitrary finite types and then show that the resulting logical relation holds for all terms of $\T$. However, if one wants to prove a certain property~$P$ of functions $X \to \N$, the above syntactic method may not work directly, because the property $P$ may not be captured by the inductively defined logical relation. Our monadic translation can serve a preliminary step to solve the problem by lifting natural numbers to functions $X \to \N$ so that the desired property~$P$ becomes the base case of the logical relation.

Let $X$ be a type of~$\T$. Consider the nucleus $(\JN,\eta,\kappa)$ with $\JN = X \to \N$ and $\eta : \N \to \JN$ and $\kappa : (\N \to \JN) \to \JN \to \JN$ defined by
\[
\eta(n) \eqdef \lambda x.n
\qquad
\kappa(g,f) \eqdef \lambda x.g(fx,x).
\]
Clearly $\eta$ maps a natural number~$n$ to a constant function with value~$n$. The intuition of $\kappa(g,f):X \to \N$ is the following: Given an input~$x$, we have an index~$fx$ to get a function $g(fx)$ from the sequence~$g$. Then we apply it to the input~$x$ to get the final value.

Given $x:X$, we define a logical relation ${\E{x}{\rho}} \subseteq \rho \times \rho^\J$ inductively as in Theorem~\ref{thm:LR}:
\[
\begin{aligned}
n \E{x}{\N} f & \eqdef n = fx \\
g \E{x}{\sigma \to \tau} h & \eqdef \forall y^\sigma, z^{\sigma^\J} \left( y \E{x}{\sigma} z \to gy \E{x}{\tau} hz \right).
\end{aligned}
\]
Clearly the conditions (\ref{eq:LR}) hold; thus by Theorem~\ref{thm:LR} we have $t \E{x}{\rho} t^\J$ for any closed term~$t$ of~$\T$. In particular, for any closed term $f: X \to \N$ of $\T$, we have
\[
\forall \Omega^{X^\J} \left( x \E{x}{X} \Omega \to fx = f^\J(\Omega,x) \right).
\]
For some type $X$ of~$\T$, we may be able to construct a closed term $\Omega : X^\J$ such that $x \E{x}{X} \Omega$ for all $x:X$, by unfolding the statement $x \E{x}{X} \Omega$. For example, if $X = \N^\N$, then $x \E{x}{X} \Omega$ is unfolded to $\forall n^\N, f^{\N^\N \to \N} \left( n=fx \to xn = \Omega(f,x) \right)$; we thus define $\Omega(f,x) \eqdef x(fx)$ as $fx=n$ by assumption and then have $x \E{x}{X} \Omega$ by definition. Once we construct such a term $\Omega : X^\J$, we have $f = f^\J \Omega$ (up to pointwise equality). The term $\Omega : X^\J$ which preserves the logical relation in the sense of $x \E{x}{X} \Omega$ for all $x:X$ is known as a \emph{generic element}~\cite{cj10,escardo:dialogue}.


Given a property~$P$ of functions $X \to \N$, we define a predicate $Q_\rho \subseteq \rho^\J$ on elements of the translated type $\rho^\J$ inductively on $\rho$:
\[
\begin{aligned}
Q_\N(f) & \eqdef P(f) \\
Q_{\sigma \to \tau}(h) & \eqdef \forall z^{\sigma^\J} \left( Q_\sigma(z) \to Q_\tau(hz) \right).
\end{aligned}
\]
Note that $Q$ is just an instance of the binary relation defined in Theorem~\ref{thm:LR}. Once we prove the conditions (\ref{eq:LR}) for $Q$, \ie
\[
\forall n^\N Q_\N(\eta n)
\qquad \text{and} \qquad
\forall g^{\N \to X \to \N} \left( \forall n^\N Q_\N(g n) \to Q_{\N \to \N}(\kappa g) \right)
\]
we have $Q(t^\J)$ for any closed term~$t$ of~$\T$. If we prove also $Q_X(\Omega)$, then we have $P(f)$ for all closed terms $f:\N \to X$ of $\T$ because $Q_\N(f^\J \Omega)$ and $f=f^\J \Omega$.


All the remaining examples are about properties of $\T$-definable functions $\N^\N \to \N$ which can be proved following the above steps. We instead enrich the `lifting' nucleus to reflect the computational content of the properties so that witnesses of the properties can be obtained as terms of~$\T$ directly via the translation.

\subsection{Continuity}
\label{sec:cont}

The next applications of our monadic translation are to recover the well-known results that every $\T$-definable function $\Baire \to \N$ is pointwise continuous and its restriction to any compact subspace is uniformly continuous~\cite{beeson:foundations}.

There are various approaches to continuity: Kohlenbach~\cite{kohlenbach:majorization} extracts a term from the extensionality proof via the Dialectica interpretation, and then uses the majorant of this term to construct a modulus of uniform continuity. Coquand and Jaber~\cite{cj10} extend type theory with a new constant for a generic element, decorate the operational semantics with forcing information, and then extract continuity information of a functional by applying it to the generic element. Escard\'o~\cite{escardo:dialogue} also employs a generic element to compute continuity but in his model of dialogue trees, which is closely related to our syntactic approach as discussed in~Remark~\ref{rm:cont}. There are also various sheaf models~\cite{escardo:xu:kk,fourman:notions,vdh:moerdijk:sheaf} in which all functionals from the Baire space ~$\N^\N$ are continuous and those from the Cantor space $\Cantor$ are uniformly continuous. Powell~\cite{powell:ccht} introduces a monadic translation for some call-by-value functional languages, one of whose instantiations tackles also continuity of $\T$-definable functionals. His method corresponds to Kuroda's negative translation as discussed in Section~\ref{sec:kuroda}.

We enrich the `lifting' nucleus (Section~\ref{sec:lift}) so that moduli of (uniform) continuity are obtained directly from the translation. For the sake of convenience, we extend System $\T$ with products. Such extension can be avoided by working with sequences of types and terms as in the literature of functional interpretations such as~\cite{oliva:unifying}.

\subsubsection{Translating products}
\label{sec:prod}
We extend System~$\T$ with product type $\sigma \times \tau$ and constants
\[
\tpair : \sigma \to \tau \to \sigma \times \tau
\qquad
\fst : \sigma \times \tau \to \sigma
\qquad
\snd : \sigma \times \tau \to \tau
\]
satisfying the usual computation rules. Similarly to Gentzen's translation of conjunction, we translate product type component-wise, \ie~$(\sigma \times \tau)^\J \eqdef \sigma^\J \times \tau^\J$. Then the above constants are translated into themselves but of the translated types, \eg~$\fst^\J \eqdef \fst : \sigma^\J \times \tau^\J \to \sigma^\J$. Recall that the primitive recursor is translated using $\ke_\sigma : (\N \to \sigma^\J) \to \JN \to \sigma^\J$ which is defined inductively on~$\sigma$. So we have to add the following case
\[
\ke_{\sigma \times \tau} \eqdef \lambda g^{\N \to \sigma^\J \times \tau^\J} a^{\J\N}. \, \tpair \left( \ke_\sigma(\fst \circ g, a) , \ke_\tau(\snd \circ g, a) \right)
\]
into the definition of $\ke$ in order to complete the translation. For the fundamental theorem of logical relation, when extending a relation ${\R{\N}} \subseteq \N \times \JN$ to ${\R{\rho}} \subseteq \rho \times \rho^\J$, we add the following case for product type
\[
u \R{\sigma \times \tau} v \eqdef \left( \fst u \R{\sigma} \fst v \right) \wedge \left( \snd u \R{\tau} \snd v \right).
\]
and can easily show that the constants of product types are related to their translations. We often write $\pair{a}{b}$ instead of $\tpair(a,b)$ for the sake of readability.

\subsubsection{Pointwise continuity}
\label{sec:pcont}
Recall that a function $M: \Baire \to \N$ is a \emph{modulus of continuity} of $f: \Baire \to \N$ if
\[
\forall \alpha^{\Baire} \beta^{\Baire} \left( \alpha =_{M\alpha} \beta \to f\alpha = f\beta \right)
\]
where $\alpha =_m \beta$ stands for $\forall i\mathord{<}m \left(\alpha i = \beta i\right)$. Our goal is to find a suitable nucleus~$\J$ so that we can obtain such a functional~$M$ from the $\J$-translation of~$f$ and then verify its correctness using the fundamental theorem of logical relation.

Let $\JN = (\Baire \to \N) \times (\Baire \to \N)$. For $w:\JN$ we write $\rV_w$ to denote its first component and $\rM_w$ the second, due to the intuition that $\rM_w$ is a modulus of continuity of the value component $\rV_w$. Then we define $\eta: \N \to \JN$ by
\[
\eta(n) \eqdef \pair{\lambda \alpha. n}{\lambda \alpha. 0}
\]
and $\kappa: (\N \to \JN) \to \JN \to \JN$ by
\[
\kappa(g,w) \eqdef \pair{\lambda \alpha. \rV_{g(\rV_w (\alpha))}(\alpha)}{\lambda \alpha. \max(\rM_{g(\rV_w (\alpha))}(\alpha),\rM_w(\alpha))}.
\]
Note that the `value' components form a `lifting' nucleus in the sense of Section~\ref{sec:lift} so that natural numbers are lifted to functions $\Baire \to \N$. And the `modulus' components will allow the translation to equip a continuity structure to the values. Reasonably $\eta(n)$ equips the constantly zero function as a modulus of continuity to the constant function $\lambda \alpha.n$ since its input is never accessed. As to $\kappa(g,w)$, its value at a point~$\alpha$ has two possible moduli: one given by $g(\rV_w(\alpha))$ and the other by $w$; thus the greater one is a modulus of continuity at $\alpha$.

We work with a logical relation ${\E{\alpha}{\rho}} \subseteq \rho \times \rho^\J$ which is parametrized by $\alpha : \Baire$. Specifically, its base case ${\E{\alpha}{\N}} \subseteq \N \times \JN$ is defined by
\[
n \E{\alpha}{\N} w \eqdef n = \rV_w(\alpha) \wedge \forall \beta \left( \alpha =_{\rM_w(\alpha)} \beta \to \rV_w(\alpha) = \rV_w(\beta) \right).
\]
The first component of $n \E{\alpha}{\N} w$ states that the value of~$w$ at~$\alpha$ is~$n$, while the second explains exactly the intuition of the type $\JN$, namely that $\rM_w(\alpha)$ is a modulus of continuity of~$\rV_w$ at~$\alpha$.
We leave the proof of (\ref{eq:LR}) to the reader. By Theorem~\ref{thm:LR}, we have $t \E{\alpha}{\rho} t^\J$ for any $\alpha : \Baire$ and for any closed term $t:\rho$ of~$\T$. In particular, we have $f \E{\alpha}{\Baire \to \N} f^\J$ for every closed term $f: \Baire \to \N$ of $\T$.

The last step is to construct the generic element $\Omega: \JN \to \JN$ such that $\alpha \E{\alpha}{\Baire} \Omega$ for all $\alpha:\Baire$. Once we unfold $\alpha \E{\alpha}{\Baire} \Omega$, we can see that, for any $w:\JN$, the value of $\Omega(w)$ has to be $\lambda \alpha. \alpha(\rV_w(\alpha))$ as discussed in Section~\ref{sec:lift}. Then we also need to construct its modulus of continuity. There are two possible moduli at $\alpha$: one is $\rV_w(\alpha)+1$ because the modulus of continuity of $\lambda \alpha.\alpha n$ at $\alpha$ is $n+1$, and the other is $\rM_w(\alpha)$. We just take the greater one and then end up with the following definition:
\[
\Omega(w) \eqdef \pair{\lambda \alpha. \alpha(\rV_w(\alpha))}{\lambda \alpha. \max(\rV_w(\alpha)+1,\rM_w(\alpha))}.
\]
One may have noticed that the above is highly similar to the definition of~$\kappa$. Indeed, we have $\Omega = \kappa(\lambda n.\pair{\lambda \alpha.\alpha n}{\lambda \alpha. n+1})$.

\begin{theorem}
\label{thm:cont}
Every closed term $f: \Baire \to \N$ of~$\T$ has a modulus of continuity given by the term $\rM_{f^\J(\Omega)}$.
\end{theorem}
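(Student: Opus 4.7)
The plan is to apply Theorem~\ref{thm:LR} with the nucleus $(\JN,\eta,\kappa)$ and the logical relation $\E{\alpha}{\rho}$ just defined, instantiate the resulting soundness statement for $f$ with $\alpha$ and the generic element $\Omega$, and then read off continuity directly from the unfolded base relation.

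First, I would discharge the two hypotheses (\ref{eq:LR}) for $\E{\alpha}{\N}$. The unit case is immediate: $\eta n = \pair{\lambda \alpha. n}{\lambda \alpha. 0}$ has value $n$ at $\alpha$ and a constant-zero modulus, which trivially witnesses local continuity. For the bind case, assume $\forall n \, (fn \E{\alpha}{\N} gn)$ and $m \E{\alpha}{\N} w$, and unfold $\kappa(g,w)$. The value component gives $\rV_{\kappa(g,w)}(\alpha) = \rV_{g(\rV_w(\alpha))}(\alpha) = \rV_{g(m)}(\alpha) = fm$. For the modulus component, given any $\beta$ with $\alpha =_{\rM_{\kappa(g,w)}(\alpha)} \beta$, the agreement up to $\rM_w(\alpha)$ yields $\rV_w(\beta) = \rV_w(\alpha) = m$ by the continuity component of $m \E{\alpha}{\N} w$, so $g(\rV_w(\beta)) = g(\rV_w(\alpha))$; the agreement up to $\rM_{g(\rV_w(\alpha))}(\alpha)$ together with $f(\rV_w(\alpha)) \E{\alpha}{\N} g(\rV_w(\alpha))$ then gives $\rV_{g(\rV_w(\alpha))}(\alpha) = \rV_{g(\rV_w(\alpha))}(\beta)$, and these combine into $\rV_{\kappa(g,w)}(\alpha) = \rV_{\kappa(g,w)}(\beta)$. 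The role of $\max$ in the definition of $\kappa$ is precisely to make both ingredients available simultaneously; I expect this to be the main technical point.

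Next, I would verify that $\Omega$ is indeed a generic element, i.e.\ $\alpha \E{\alpha}{\Baire} \Omega$ for every $\alpha$. Unfolding the function-type clause, I must show $\alpha n \E{\alpha}{\N} \Omega(w)$ whenever $n \E{\alpha}{\N} w$. The value part is $\rV_{\Omega(w)}(\alpha) = \alpha(\rV_w(\alpha)) = \alpha n$. For the modulus part, suppose $\alpha =_{\max(\rV_w(\alpha)+1,\rM_w(\alpha))} \beta$; the $\rM_w(\alpha)$-agreement gives $\rV_w(\beta) = n$, and the $(n+1)$-agreement gives $\alpha n = \beta n$, so $\rV_{\Omega(w)}(\alpha) = \alpha n = \beta n = \beta(\rV_w(\beta)) = \rV_{\Omega(w)}(\beta)$, as required.

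Finally, Theorem~\ref{thm:LR} yields $f \E{\alpha}{\Baire \to \N} f^\J$ for every closed $f:\Baire \to \N$ and every $\alpha$, so applying the function-type clause to $\alpha \E{\alpha}{\Baire} \Omega$ gives $f\alpha \E{\alpha}{\N} f^\J(\Omega)$. Unpacking, $f\alpha = \rV_{f^\J(\Omega)}(\alpha)$, and for any $\beta$ with $\alpha =_{\rM_{f^\J(\Omega)}(\alpha)} \beta$ also $\rV_{f^\J(\Omega)}(\alpha) = \rV_{f^\J(\Omega)}(\beta)$. Reapplying the same derivation at $\beta$ (note that the equation $f\beta = \rV_{f^\J(\Omega)}(\beta)$ follows by instantiating the $\E{\beta}{}$-version of soundness, using that $\rV_{f^\J(\Omega)}$ does not depend on the parameter), I conclude $f\alpha = f\beta$, which is exactly the claim that $\rM_{f^\J(\Omega)}$ is a modulus of continuity of $f$. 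Apart from the $\max$-combination step in the verification of $\kappa$, the entire argument is a routine unfolding of definitions.
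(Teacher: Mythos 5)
Your proposal is correct and follows essentially the same route as the paper: the paper's own proof of Theorem~\ref{thm:cont} is exactly your final paragraph, with the verification of the conditions (\ref{eq:LR}) explicitly left to the reader and the genericity of $\Omega$ discussed in the preceding text. Your worked-out details of the $\kappa$-case and of $\alpha \E{\alpha}{\Baire} \Omega$, including the observation that $f\beta = \rV_{f^\J(\Omega)}(\beta)$ comes from instantiating soundness at $\beta$ while $f^\J(\Omega)$ is parameter-independent, are all accurate.
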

\begin{proof}
Because $f \E{\alpha}{\Baire \to \N} f^\J$ and $\alpha \E{\alpha}{\Baire} \Omega$, we have $f\alpha \E{\alpha}{\N} f^\J(\Omega)$ for any $\alpha: \Baire$, which implies (i)~$f=\rV_{f^\J(\Omega)}$ up to pointwise equality, and (ii)~$\rM_{f^\J(\Omega)}$ is a modulus of continuity of $\rV_{f^\J(\Omega)}$. Therefore, $\rM_{f^\J(\Omega)}$ is also a modulus of continuity of~$f$.
\end{proof}

\begin{remark}
\label{rm:cont}
The above development can be viewed as a syntactic (and simplified) version of Escard\'o's approach via dialogue trees~\cite{escardo:dialogue}. The algorithms to construct moduli of continuity in these two methods are exactly the same. On the other hand, though Powell works also with a monadic translation~\cite{powell:ccht}, his algorithm is different because he translates terms in the call-by-value manner. We will look into this in more detail in Section~\ref{sec:kuroda}.
\end{remark}

\subsubsection{Uniform continuity}
\label{sec:ucont}

The objective here is to, for each closed term $f : \Baire \to \N$ of $\T$, construct a \emph{modulus of uniform continuity} $M: \Baire \to \N$, \ie
\[
\forall \delta^{\Baire} \alpha^{\Baire} \beta^{\Baire} \left( \alpha \leq^1 \delta \wedge \beta \leq^1 \delta \wedge \alpha =_{M\delta} \beta \to f\alpha =f\beta \right)
\]
where $\alpha \leq^1 \beta$ stands for $\forall i (\alpha i \leq \beta i)$. The value $M\delta$ is called a modulus of uniform continuity of~$f$ on $\{ \alpha : \Baire \mid \alpha \leq^1 \delta \}$. The following fact of uniform continuity plays an important role in the construction:

\begin{lemma}
\label{lm:mi}
If $f: \Baire \to \N$ is uniformly continuous on $\{ \alpha : \Baire \mid \alpha \leq^1 \delta \}$ with a modulus~$m$, then it has a maximum image on $\{ \alpha : \Baire \mid \alpha \leq^1 \delta \}$.
\end{lemma}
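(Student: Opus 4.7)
The plan is to reduce the problem to a finite search. Let $m \eqdef M\delta$ be the given modulus of uniform continuity on the set $K_\delta \eqdef \{\alpha : \Baire \mid \alpha \leq^1 \delta\}$. The key observation is that uniform continuity with modulus~$m$ means the value $f\alpha$ for $\alpha \in K_\delta$ depends only on the initial segment $\alpha(0),\ldots,\alpha(m-1)$; and on $K_\delta$ each $\alpha(i)$ lies in the finite set $\{0,\ldots,\delta(i)\}$. Hence $f$ takes only finitely many values on $K_\delta$, and these can be enumerated by a bounded search over tuples in $\prod_{i<m}\{0,\ldots,\delta(i)\}$.

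Concretely, I would proceed in two steps. First, for each $k \leq m$ and each finite sequence $s$ of length~$k$ with $s(i) \leq \delta(i)$, define the function $\phi(k,s)$ that returns the maximum value of $f$ over all extensions of $s$ inside $K_\delta$. This is given by downward recursion on $m-k$:
\[
\phi(m,s) \eqdef f(\overline{s}), \qquad \phi(k,s) \eqdef \max_{j\leq\delta(k)} \phi(k{+}1,\, s*j) \quad (k<m),
\]
where $\overline{s}$ is any extension of $s$ inside $K_\delta$, e.g. the one padding with zeros. The base case is well-defined precisely because of uniform continuity: any two extensions of~$s$ in $K_\delta$ agree on the first~$m$ coordinates and therefore receive the same $f$-value. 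The maximum $N \eqdef \phi(0,\langle\rangle)$ is then an upper bound of $f$ on $K_\delta$.

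Second, to exhibit an actual $\alpha^\star \in K_\delta$ with $f\alpha^\star = N$, I would run the same recursion, carrying along a witnessing index: at each step choose $j^\star \leq \delta(k)$ realizing $\max_{j\leq \delta(k)} \phi(k{+}1,s*j)$ (using the usual argmax on a bounded range of naturals), and assemble the chosen prefix $s^\star$ of length~$m$. Taking $\alpha^\star \eqdef s^\star * 0^\omega$ (or any extension in $K_\delta$) yields $f\alpha^\star = N$ by construction, so $f\alpha \leq f\alpha^\star$ for every $\alpha \in K_\delta$.

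The main obstacle is not conceptual but bookkeeping: phrasing the bounded recursion over finite sequences cleanly and checking that the base case is independent of the chosen padding (which is exactly where uniform continuity is used). Everything else is a routine nested $\max$ over a finite index range, which is definable in~$\T$; since the lemma is stated only as an existence claim, this finite search suffices.
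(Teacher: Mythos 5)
Your proof is correct and takes essentially the same approach as the paper: both reduce the problem to a nested finite $\max$ over the tree of $\delta$-bounded sequences of length $m$, with uniform continuity guaranteeing that the base case (where the whole prefix is fixed) is independent of the padding; the paper merely organizes the recursion by peeling off the head, defining $\Theta(m+1,f,\delta)$ in terms of $\Theta(m,\lambda\alpha.f(i*\alpha),\delta\circ\suc)$ for $i\leq\delta 0$, instead of carrying the prefix $s$ as a parameter. Your additional argmax step exhibiting a witness $\alpha^\star$ is not spelled out in the paper but is harmless, since a nested maximum of attained values is attained.
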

\begin{proof}
We compute the maximum image $\Theta(m,f,\delta)$ by induction on the modulus~$m$:
\[
\begin{aligned}
\Theta(0,f,\delta) & \eqdef f\delta \\
\Theta(m+1,f,\delta) & \eqdef \Phi \left( \lambda i. \Theta \left(m , \lambda \alpha .f(i*\alpha), \delta \circ \suc \right), \delta 0 \right)
\end{aligned}
\]
where $i*\alpha$ is an infinite sequence with head $i$ and tail $\alpha$, and $\Phi : \Baire \to \N \to \N$ defined by
\[
\begin{aligned}
\Phi(\alpha,0) & \eqdef \alpha 0 \\
\Phi(\alpha,n+1) & \eqdef \max \left( \Phi(\alpha,n), \alpha(n+1) \right)
\end{aligned}
\]
\ie~$\Phi(\alpha,n)$ is the greatest $\alpha i$ for $i\leq n$ (note that $\Phi$ is the $\kappa$ of the `majorizability' nucleus introduced in Section~\ref{sec:maj}). In the base case of $\Theta$, the modulus is 0 and thus $f$ is constant with the value $f\delta$. To compute $\Theta(m+1,f,\delta)$, by induction hypothesis we have for each~$i:\N$ the maximum image $\Theta\left(m , \lambda \alpha .f(i*\alpha), \delta \circ \suc \right))$ of the function $\lambda \alpha .f(i*\alpha)$ with a modulus~$m$ on the inputs bounded by $\delta \circ \suc$. Because the inputs of~$f$ are bounded by $\delta$, the greatest $\Theta\left(m , \lambda \alpha .f(i*\alpha), \delta \circ \suc \right))$ for $i<\delta 0$ is the maximum image of~$f$, and we use $\Phi$ to find it. Note that both $\Phi$ and $\Theta$ can be defined in $\T$ using $\rec$.
\end{proof}

We are now ready to construct the nucleus. Let $\JN = (\Baire \to \N) \times (\Baire \to \N)$. The ieda is exactly the same as in the previous treatment to pointwise continuity: For any $w:\JN$, its second component $\rM_w$ is (expected to be) a modulus of uniform continuity of the first component $\rV_w$. Then we define $\eta: \N \to \JN$ by
\[
\eta(n) \eqdef \pair{\lambda \alpha. n}{\lambda \alpha. 0}
\]
and $\kappa: (\N \to \JN) \to \JN \to \JN$ by
\[
\kappa(g,w) \eqdef \pair{\lambda \alpha. \rV_{g(\rV_w(\alpha))}(\alpha)}{\lambda \delta. \max(\Phi(\lambda i. \rM_{gi}(\delta), \Theta(\rM_w(\delta),\rV_w,\delta)), \rM_w(\delta))}
\]
where $\Phi$ and $\Theta$ are defined in the proof of Lemma~\ref{lm:mi}. Specifically, we construct a modulus of uniform continuity for the value of $\kappa(g,w)$ as follows: Given~$\delta : \Baire$, we have two possible moduli given by those of $g$ and $w$ at $\delta$ and thus we choose the larger one. Actually the only complication comes from the calculation of the modulus given by~$g$. For each $i:\N$ we have a modulus $\rM_{gi}(\delta)$. In the value of $\kappa(g,w)$, we apply $g$ to $\rV_w(\alpha)$ for input~$\alpha$. Because $\rV_w$ has a maximum image computed using $\Theta$, we only need to find the greatest modulus $\rM_{gi}(\delta)$ for $i$ not greater than the maximum image of $\rV_w$, and we use~$\Phi$ for this purpose.

Given $\delta: \Baire$, the base case ${\E{\delta}{\N}} \subseteq \N \times \JN$ of the logical relation is defined by
\[
n \E{\delta}{\N} w \eqdef n = \rV_w(\delta) \wedge \forall \alpha, \beta \left( \alpha \leq^1 \delta \wedge \beta \leq^1 \delta \wedge \alpha =_{\rM_w(\delta)} \beta \to \rV_w(\alpha) = \rV_w(\beta) \right).
\]
In words, $n \E{\delta}{\N} w$ means that $n$ is the value of $w$ at $\delta$ and that $\rM_w(\delta)$ is a modulus of uniform continuity of $\rV_w$ on $\{ \alpha : \Baire \mid \alpha \leq^1 \delta \}$. It is routine to check that both $\eta$ and $\kappa$ preserve the logical relation in the sense of (\ref{eq:LR}), which we again leave to the reader. By Theorem~\ref{thm:LR} we have $t \E{\delta}{\rho} t^\J$ for any $\delta : \Baire$ and for any closed term $t:\rho$ of~$\T$. Moreover, the generic element $\Omega: \JN \to \JN$ defined by
\[
\Omega \eqdef \kappa(\lambda n. \pair{\lambda \alpha. \alpha n}{\lambda \alpha. n+1})
\]
also preserves the logical relation in the sense of $\delta \E{\delta}{\N^\N} \Omega$ for all $\delta:\Baire$.

With a proof similar to the one of Theorem~\ref{thm:cont}, we get the following result:

\begin{theorem}
Every closed term $f: \Baire \to \N$ of~$\T$ has a modulus of uniform continuity given by the term $\rM_{f^\J(\Omega)}$.
\end{theorem}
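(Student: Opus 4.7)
The plan is to mirror the argument used for Theorem~\ref{thm:cont}: verify that the uniform-continuity nucleus and its generic element satisfy the hypotheses of the Fundamental Theorem of Logical Relation, then read off the modulus from the translated term applied to $\Omega$. First I would dispatch the two clauses of~(\ref{eq:LR}) for the base relation $\E{\delta}{\N}$. The clause for $\eta$ is immediate, since $\eta(n)$ has constant value~$n$ and the zero function is trivially a modulus of uniform continuity of a constant. For the $\kappa$ clause, I would assume $\forall n\,(fn \E{\delta}{\N} gn)$ and $n \E{\delta}{\N} w$ and verify $fn \E{\delta}{\N} \kappa(g,w)$ by checking the two conjuncts of the base relation separately.

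The value conjunct unfolds easily: from $n = \rV_w(\delta)$ and the hypothesis at index $\rV_w(\delta)$ we obtain $fn = \rV_{g(\rV_w(\delta))}(\delta) = \rV_{\kappa(g,w)}(\delta)$. The modulus conjunct is the heart of the argument and the step I expect to be the main obstacle. Given $\alpha,\beta \leq^1 \delta$ agreeing on $\rM_{\kappa(g,w)}(\delta) = \max(\Phi(\lambda i.\rM_{gi}(\delta),\,\Theta(\rM_w(\delta),\rV_w,\delta)),\,\rM_w(\delta))$ positions, I split into two sub-obligations: agreement up to $\rM_w(\delta)$ positions gives $\rV_w(\alpha) = \rV_w(\beta)$ via the uniform-continuity hypothesis on $w$; then Lemma~\ref{lm:mi} bounds $\rV_w(\alpha)$ by $\Theta(\rM_w(\delta),\rV_w,\delta)$, so the defining property of $\Phi$ recalled in Section~\ref{sec:maj} yields $\rM_{g(\rV_w(\alpha))}(\delta) \leq \Phi(\lambda i.\rM_{gi}(\delta),\,\Theta(\rM_w(\delta),\rV_w,\delta))$; agreement at that many positions together with the hypothesis on $g(\rV_w(\alpha))$ then forces $\rV_{g(\rV_w(\alpha))}(\alpha) = \rV_{g(\rV_w(\beta))}(\beta)$, which is the required equality.

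Next I would check that $\Omega$ is a generic element, i.e.\ $\delta \E{\delta}{\Baire} \Omega$ for all $\delta$; this reduces, for every $w$ with $n \E{\delta}{\N} w$, to showing $\delta n \E{\delta}{\N} \Omega(w)$, which follows by essentially the same calculation, using that $\lambda \alpha.\alpha n$ is uniformly continuous with modulus $n+1$ on any set bounded by $\delta$. Finally, Theorem~\ref{thm:LR} applied at $f$ gives $f \E{\delta}{\Baire \to \N} f^\J$, and chaining with $\delta \E{\delta}{\Baire} \Omega$ produces $f\delta \E{\delta}{\N} f^\J(\Omega)$. Unfolding the base relation yields (i) $f\delta = \rV_{f^\J(\Omega)}(\delta)$ for every $\delta$, and (ii) $\rM_{f^\J(\Omega)}(\delta)$ is a modulus of uniform continuity of $\rV_{f^\J(\Omega)}$ on $\{\alpha : \Baire \mid \alpha \leq^1 \delta\}$; point (i) lets us transfer (ii) from $\rV_{f^\J(\Omega)}$ to $f$, completing the proof.
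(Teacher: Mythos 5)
Your proposal is correct and follows exactly the route the paper intends: the paper itself only says the proof is ``similar to the one of Theorem~\ref{thm:cont}'' and leaves the verification of~(\ref{eq:LR}) for the uniform-continuity nucleus to the reader, and your filled-in details --- using the hypothesis on $w$ to get $\rV_w(\alpha)=\rV_w(\beta)$, then Lemma~\ref{lm:mi} to bound $\rV_w(\alpha)$ by $\Theta(\rM_w(\delta),\rV_w,\delta)$, then the $\Phi$-property (Lemma~\ref{lm:maj}) to dominate $\rM_{g(\rV_w(\alpha))}(\delta)$ --- are precisely the intended computation. The only cosmetic difference is that the paper obtains $\delta \E{\delta}{\Baire} \Omega$ by applying the already-verified $\kappa$-clause to the family $\lambda n.\pair{\lambda \alpha. \alpha n}{\lambda \alpha. n+1}$ rather than by unfolding $\Omega(w)$ directly, but the two checks are equivalent.
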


\subsection{General bar recursion}
\label{sec:bar}
To prove Schwichtenberg's theorem~\cite{schwichtenberg:brct} that the System $\T$ definable functionals are closed under a rule-like version Spector's bar recursion of type levels 0 and 1, Oliva and Steila~\cite{OS:bar} introduce a notion of general bar recursion whose termination condition is given by decidable monotone predicates on finite sequences. As the last example, we recover their construction of general-bar-recursion functionals (\cite[Definitions~3.1 \& 3.3]{OS:bar}) via an instantiation of our translation. For this, we need the following notations:
\begin{itemize}
\item We represent decidable predicates as functions $\N^* \to \mathbf{2}$, where $\N^*$ is the type of finite sequences of natural numbers and $\mathbf{2} = \{ 0,1 \}$ is the type of booleans.
\item For any $S:\N^* \to \mathbf{2}$ and $s:\N^*$, we write $S(s)$ instead of $S(s)=1$.
\item For any $s:\N^*$, we write $|s| : \N$ to denote its length and $\hat{s} : \Baire$ the extension of $s$ with infinitely many $0$'s.
\item For any $s:\N^*$ and $n:\N$, we write $s*n : \N^*$ to denote appending $n$ to $s$.
\item For any $s:\N^*$ and $\alpha : \Baire$, we write $s*\alpha : \Baire$ to denote their concatenation.
\end{itemize}
Note that the treatment of $\N^*$ and $\mathbf{2}$ is not essential. For instance, we can represent a finite sequence $s$ by a pair $\Pair{\alpha}{n}$ and consider $s$ as the prefix of the infinite sequence $\alpha$ of length $n$ as in our Agda implementation. All the above operations on sequences are definable in $\T$.
%
%
We also need the following definitions:
\begin{itemize}
\item We call $\xi : (\N^* \to \sigma) \to (\N^* \to \sigma^\N \to \sigma) \to \N^* \to \sigma$ a functional of \emph{general bar recursion} for $S:\N^* \to \mathbf{2}$ if $\mathcal{GBR}_S(\xi)$ holds where $\mathcal{GBR}_S(\xi)$ is defined by
\[
\mathcal{GBR}_S(\xi) \eqdef \forall \, G^{\N^* \to \sigma} H^{\N^* \to \sigma^\N \to \sigma} s^{\N^*} \left\{
 \begin{aligned}
 S(s) & \to \xi(G,H,s) = G(s) \\[-3pt]
 & \hspace{4pt} \wedge \\[-3pt]
 \neg S(s) & \to \xi(G,H,s) = H(s, \lambda n^\N.\xi(G,H,s*n))
 \end{aligned}
 \right\}.
\]
\item A predicate $S$ is \emph{monotone} if $S(s)$ implies $S(s * n)$ for all $s:\N^*$ and $n:\N$.
\item For $Y : \Baire \to \N$, we say $S$ \emph{secures} $Y$ if
\[
\forall s^{\N^*} \left( S(s) \to \forall \alpha^{\Baire} \, Y(s*\alpha) = Y(\hat{s}) \right).
\]
\end{itemize}

Let $Y : \N^\N \to \N$ be a closed term of $\T$. Oliva and Steila show (i)~for any $S$ securing $Y$, from a functional of general bar recursion for $S$ we can construct a functional of Spector's bar recursion for $Y$~\cite[Theorem~2.4]{OS:bar}, and (ii)~we can construct a monotone predicate~$S$ that secures $Y$ and a functional of general bar recursion for $S$~\cite[Theorem~3.4]{OS:bar}. In this way, they give a new proof of Schwichtenberg's bar recursion closure theorem with an explicit construction of Spector's bar-recursion functionals.

We firstly construct a nucleus for general bar recursion. Fix a type $\sigma$ of $\T$. Let $\JN = (\Baire \to \N) \times (\N^* \to \mathbf{2}) \times ((\N^* \to \sigma) \to (\N^* \to \sigma^\N \to \sigma) \cto \N^* \to \sigma)$. Given $w:\JN$, we write $\rV_w, \rS_w, \rB_w$ to denote its three components. The intuition is that $\rS_w$ is a monotone predicate securing $\rV_w$ and $\rB_w$ is a functional of general bar recursion for $\rS_w$. We define $\eta : \N \to \JN$ by
\[
\eta(n) \eqdef \triple{\lambda \alpha.n}{\lambda s.1}{\lambda G H .G}
\]
and $\kappa : (\N \to \JN) \to \JN \to \JN$ by
\[
\kappa(g,w) \eqdef
\triple{\lambda \alpha. \rV_{g(\rV_w \alpha)}\alpha}
       {\lambda s. \min(\rS_w(s),\rS_{g(\rV_w \hat{s})}(s))}
       {\lambda G H . \rB_w(\lambda s. \rB_{g(\rV_w \hat{s})}(G,H,s),H)}
\]
where $\min : \N \to \N \to \N$ returns the smaller argument. Lastly, we define the generic element $\Omega : \JN \to \JN$ by
\[
\Omega \eqdef \kappa (\lambda n. \triple{\lambda \alpha. \alpha n}{\lambda s. \mathrm{Le}(n,|s|)}{\Psi n})
\]
where $\mathrm{Le} : \N \to \N \to \mathbf{2}$ has value $1$ iff its first argument is strictly smaller than the second, and $\Psi n : (\N^* \to \sigma) \to (\N^* \to \sigma^\N \to \sigma) \to \N^* \to \sigma$ is a $\T$-definable functional of bar recursion for the constant function $Y=\lambda \alpha.n$~\cite[Lemma~2.1]{OS:bar}, \ie
\[
\forall \, G^{\N^* \to \sigma} H^{\N^* \to \sigma^\N \to \sigma} s^{\N^*} \left\{
 \begin{aligned}
 n < |s| & \to \Psi n(G,H,s) = G(s) \\[-3pt]
 & \hspace{4pt} \wedge \\[-3pt]
 n \geq |s| & \to \Psi n(G,H,s) = H(s, \lambda m.\Psi n(G,H,s*m))
 \end{aligned}
 \right\}.
\]
For any $n:\N$, it is clear that $\lambda s. \mathrm{Le}(n,|s|)$ is a monotone predicate that secures $\lambda \alpha.\alpha n$, and that $\Psi n$ is a functional of general bar recursion for $\lambda s. \mathrm{Le}(n,|s|)$.

\begin{theorem}
For any closed term $Y : \Baire \to \N$ of $\T$,
\begin{enumerate}
\item $\rS_{Y^\J \Omega}$ is a monotone predicate securing $Y$, and
\item $\rB_{Y^\J \Omega}$ is a functional of general bar recursion of $\rS_{Y^\J \Omega}$.
\end{enumerate}
\end{theorem}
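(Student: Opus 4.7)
The plan is to apply the fundamental theorem of logical relation (Theorem~\ref{thm:LR}) with a base relation that bundles together the four properties we wish to establish. For $\alpha : \Baire$, define
\[
n \E{\alpha}{\N} w \eqdef n = \rV_w(\alpha) \,\wedge\, \rS_w \text{ is monotone} \,\wedge\, \rS_w \text{ secures } \rV_w \,\wedge\, \mathcal{GBR}_{\rS_w}(\rB_w),
\]
and extend inductively to all types as in Theorem~\ref{thm:LR}. Assuming the conditions~(\ref{eq:LR}) for $\eta$ and $\kappa$, together with the generic-element property $\alpha \E{\alpha}{\Baire} \Omega$ for all $\alpha$, the theorem yields $Y \E{\alpha}{\Baire \to \N} Y^\J$ and hence $Y\alpha \E{\alpha}{\N} Y^\J \Omega$ for every $\alpha$. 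Reading off the four conjuncts gives $Y = \rV_{Y^\J \Omega}$ pointwise, $\rS_{Y^\J \Omega}$ monotone, $\rS_{Y^\J \Omega}$ securing $\rV_{Y^\J \Omega}$ (hence also $Y$), and $\mathcal{GBR}_{\rS_{Y^\J\Omega}}(\rB_{Y^\J\Omega})$, which are precisely claims~(1) and~(2).

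The verification for $\eta$ is immediate: $\lambda s.1$ is trivially monotone and vacuously secures $\lambda\alpha.n$, while $\lambda GH.G$ satisfies $\mathcal{GBR}_{\lambda s.1}$ since the recursive branch is never triggered. The generic-element verification reduces to the $\kappa$ verification, because $\Omega = \kappa(\lambda n. \triple{\lambda \alpha.\alpha n}{\lambda s.\mathrm{Le}(n,|s|)}{\Psi n})$ and for every $n$ the inner triple witnesses $\alpha n \E{\alpha}{\N} \cdot$: the predicate $\lambda s.\mathrm{Le}(n,|s|)$ is clearly monotone and secures $\lambda\alpha.\alpha n$ because $s(n)$ is fixed once $n<|s|$, while $\Psi n$ is general-bar-recursive for this predicate by~\cite[Lemma~2.1]{OS:bar}.

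The main obstacle is showing that $\kappa$ preserves all four conjuncts. Assuming $\forall n (fn \E{\alpha}{\N} gn)$ and $m \E{\alpha}{\N} w$, the value conjunct for $\kappa(g,w)$ is direct substitution using $m = \rV_w(\alpha)$. The pivotal observation for monotonicity and securing is that whenever $\rS_w(s)$ holds, $\rV_w$ is constant on all extensions of~$s$, and therefore $g(\rV_w(\widehat{s * n})) = g(\rV_w(\hat{s}))$ for every $n$. This stability of the second index in $\lambda s.\min(\rS_w(s),\rS_{g(\rV_w \hat{s})}(s))$ is what makes the conjunction simultaneously monotone and a securing predicate for $\lambda \alpha. \rV_{g(\rV_w\alpha)}(\alpha)$. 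The delicate step is the bar-recursion conjunct: to verify $\mathcal{GBR}$ for $\lambda GH.\rB_w(\lambda s. \rB_{g(\rV_w \hat s)}(G,H,s), H)$, I will split on whether $\rS_w(s)$ holds and unfold via the defining equation of $\rB_w$, and, when it does, split on whether $\rS_{g(\rV_w \hat s)}(s)$ holds and unfold via the defining equation of $\rB_{g(\rV_w \hat s)}$, again appealing to the stability of $g(\rV_w \hat s)$ under extensions of~$s$. This calculation reconstructs the composition step in~\cite[Theorem~3.4]{OS:bar}; the securing invariant isolated above is exactly what makes the two nested recursions match up with the intended $\min$-predicate.
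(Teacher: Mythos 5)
Your proposal is correct and follows essentially the same route as the paper: the same four-conjunct base relation $n \E{\alpha}{\N} w$, the same verification of~(\ref{eq:LR}) for $\eta$ and $\kappa$ (including the key observation that $\rS_w$ securing $\rV_w$ keeps the index $g(\rV_w\hat{s})$ stable under extensions of $s$, which is what the monotonicity, securing and $\mathcal{GBR}$ cases all hinge on), and the same reduction of the generic-element property of $\Omega$ to the $\kappa$ case via $\alpha n \E{\alpha}{\N} \triple{\lambda\alpha.\alpha n}{\lambda s.\mathrm{Le}(n,|s|)}{\Psi n}$. The case split in the $\mathcal{GBR}$ verification is organized slightly differently (you branch first on $\rS_w(s)$, the paper on $\rS_{\kappa(g,w)}(s)$), but the content is identical.
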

\begin{proof}
Given $\alpha : \Baire$, we define the base case ${\E{\alpha}{\N}} \subseteq \N \times \JN$ of the logical relation by
\[
n \E{\alpha}{\N} w \eqdef n = \rV_w(\alpha) \wedge \rS_w \; \text{is monotone} \wedge \rS_w \; \text{secures} \; \rV_w \wedge \mathcal{GBR}_{\rS_w}(\rB_w).
\]
To apply the fundamental theorem of logical relation, we need to check the conditions (\ref{eq:LR}):
\begin{itemize}
\item {It is trivial to prove $n \E{\alpha}{\N} \eta n$ for all $n:\N$.
\item As to $\kappa$, given $f:\N \to \N$ and $g:\N \to \JN$ such that $fi \E{\alpha}{\N} gi$ for all $i:\N$, our goal is to prove $f \E{\alpha}{\N \to \N} \kappa g$. Let $n:\N$ and $w:\JN$ with $n \E{\alpha}{\N} w$ be given.
\begin{itemize}
\item We have $fn = \rV_{gn}(\alpha) = \rV_{g(\rV_w(\alpha))}(\alpha) = \rV_{\kappa(g,w)}(\alpha)$ as in the previous examples.
\item Because $S_w$ is monotone and so is $\rS_{gi}$ for all $i:\N$ by assumption, the predicate $\rS_{\kappa(g,w)}$ is also monotone.
\item If $\rS_{\kappa(g,w)}(s)$, then $\rS_{w}(s)$ and $\rS_{g(\rV_w \hat{s})}(s)$ by definition. Given $\alpha:\Baire$, we have $\rV_w(s*\alpha)=\rV_w(\hat{s})$ because $\rS_w$ secures $\rV_w$. Then we have
$
\rV_{\kappa(g,w)}(s*\alpha) = \rV_{g(\rV_w(s*\alpha))}(s*\alpha) = \rV_{g(\rV_w(\hat{s}))}(s*\alpha) = \rV_{g(\rV_w(\hat{s}))}(\hat{s}) = \rV_{\kappa(g,w)}(\hat{s})
$
because $\rS_{g(\rV_w\hat{s})}$ secures $\rV_{g(\rV_w\hat{s})}$. Hence $\rS_{\kappa(g,w)}$ secures $\rV_{\kappa(g,w)}$.
\item Lastly we show that $\rB_{\kappa(g,w)}$ is a functional of general bar recursion for $\rS_{\kappa(g,w)}$. Let $G: {\N^* \to \sigma}$, $H: {\N^* \to \sigma^\N \to \sigma}$ and $s: {\N^*} $ be given. (1)~If $\rS_{\kappa(g,w)}(s)$, then $\rS_{w}(s)$ and $\rS_{g(\rV_w \hat{s})}(s)$, and thus we have $\rB_{\kappa(g,w)}(G,H,s)=G(s)$. (2)~If $\neg\rS_{\kappa(g,w)}(s)$, then we have two cases to check: (2.1) If $\rS_w(s)$, then $\neg\rS_{g(\rV_w \hat{s})}(s)$ by definition. It is not hard to show $\rB_{\kappa(g,w)}(G,H,s)=H(s,\lambda n.\rB_{\kappa(g,w)}(G,H,s*n))$. (2.2) If $\neg(\rS_w(s))$, then we always have $\rB_{\kappa(g,w)}(G,H,s)=H(s,\lambda n.\rB_{\kappa(g,w)}(G,H,s*n))$ no matter if $\rS_{g(\rV_w \hat{s})}(s)$ holds or not.
\end{itemize}
Hence, we have $fn \E{\alpha}{\N} \kappa(g,w)$.}
\end{itemize}
Given a closed term $Y: \Baire \to \N$ of $\T$, for any $\alpha:\Baire$ we have $Y \E{\alpha}{\Baire \to \N} Y^\J$ by Theorem~\ref{thm:LR}. For any $n:\N$, we have $n \E{\alpha}{\N} \triple{\lambda \alpha. \alpha n}{\lambda s. \mathrm{Le}(n,|s|)}{\Psi n}$ by definition; thus, $\alpha \E{\alpha}{\N \to \N} \Omega$ holds by (\ref{eq:LR}) for $\kappa$ which we have just proved. Hence we have $Y\alpha \E{\alpha}{\N} Y^\J\Omega$ for any $\alpha:\Baire$. From this, we get (i)~$Y=\rV_{Y^\J\Omega}$ up to pointwise equality, (ii)~$\rS_{Y^\J\Omega}$ is a monotone predicate securing $\rV_{Y^\J\Omega}$ and thus also $Y$, and (iii)~$\rB_{Y^\J\Omega}$ is a functional of general bar recursion for $\rS_{Y^\J\Omega}$.
%
\end{proof}

The above development is just a restructuring of the work of Oliva and Steila~\cite{OS:bar} that fits into our framework. But there are some small differences (or simplifications):
\begin{itemize}
\item \cite{OS:bar} requires the predicate $S$ to satisfy the bar condition $\forall \alpha^{\Baire} \exists n^\N S(\bar{\alpha}n)$, where $\bar{\alpha}n : \N^*$ is the prefix of $\alpha$ of length $n$. As pointed out by Makoto Fujiwara in a personal discussion, this condition is not needed for the result. So we remove it in the above development.
\item \cite{OS:bar} assumes that the closed terms $Y:\Baire \to \N$ are of the form $\lambda \alpha.t$ where $\alpha : \Baire$ is the only free variable in $t:\N$, and treats $\alpha$ as a special constant (for the generic element). Motivated by a version of Escard\'o's Agda development of~\cite{escardo:dialogue}, we avoid such extension by using the lifting nucleus that is introduced in Section~\ref{sec:lift}.
\item The syntactic translation of $\T$ in~\cite[Definitions 3.1 \& 3.3]{OS:bar} contains only the construction of general-bar-recursion functionals while the one of monotone securing predicates is given in the proof of the main theorem~\cite[Theorem~3.4]{OS:bar}. We combine both in the translation in order to split the constructions from the correctness proof.
\end{itemize}

As pointed out by an anonymous reviewer, this section unifies the results in Section~\ref{sec:cont} in the sense that moduli of (uniform) continuity can be defined from monotone securing bars and general-bar-recursion functionals: Let $Y:\Baire \to \N$ be a closed term of $\T$. The monotone predicate $\rS_{Y^\J\Omega}$ is a bar, \ie~$\forall \alpha^{\Baire} \exists n^\N \rS_{Y^\J\Omega}(\bar{\alpha}n)$, as shown in~\cite[Theorem~3.4]{OS:bar}. This together with the fact that it secures $Y$ implies the pointwise continuity of $Y$. The witness of the fact that $\rS_{Y^\J\Omega}$ is a bar obtained via \eg~modified realizability~\cite{kohlenbach:book} is a continuity modulus of $Y$. Our translation in Section~\ref{sec:cont} is just an explicit procedure to get these witnesses that are blurred in the proof of~\cite[Theorem~3.4]{OS:bar}. On the other hand, the reviewer points out that we can construct a modulus $M : \Baire \to \N$ of uniform continuity of $Y$ by
\[
M(\delta) \eqdef \rB_{Y^\J\Omega}(G,H^\delta,\nil)
\]
where $G(s)\eqdef 0$, $H^\delta(s,f) \eqdef 1+\max\{ fn \mid n \leq \delta(|s|) \}$ and $\nil:\N^*$ is the empty sequence. The idea is that if $s$ is a prefix of $\delta$ then $\rB_{Y^\J\Omega}(G,H^\delta,s)$ is a modulus of uniform continuity of the function $\lambda \alpha.Y(s * \alpha)$ on $\{ \alpha:\Baire \mid \alpha \leq^1 \delta \}$: If $\rS_{Y^\J\Omega}(s)$, we know $\lambda \alpha.Y(s * \alpha)$ is a constant function because $\rS_{Y^\J\Omega}$ secures $Y$. Then $\rB_{Y^\J\Omega}(G,H^\delta,s) = G(s) = 0$ is a proper modulus. If $\neg \rS_{Y^\J\Omega}(s)$, then the step functional $H^\delta$ finds the greatest value of the moduli of $\lambda \alpha.Y(s * n * \alpha)$ given by $\rB_{Y^\J\Omega}(G,H^\delta,s*n)$ for $n \leq \delta(|s|)$ and then adds 1 to get a modulus of $\lambda \alpha.Y(s * \alpha)$.

\section{Generalization and variants of the monadic translation}
\label{sec:last}

We have developed a self-translation of System~$\T$ in the spirit of Gentzen's negative translation. We conclude the paper by generalizing it to translate sums and comparing it with another two monadic translations.

\subsection{Translating sums}
\label{sec:gen}

We generalize our Gentzen-style translation to translate also sums. Let us extend $\T$ with sum type $\sigma + \tau$ and the following constants
\[
\inl : \sigma \to \sigma + \tau
\qquad
\inr : \tau \to \sigma + \tau
\qquad
\case : (\sigma \to \rho) \to (\tau \to \rho) \to \sigma + \tau \to \rho.
\]
In his translation, Gentzen places a double negation in front of disjunctions (see Section~\ref{sec:neg-trans}). Following this inspiration, the sum type $\sigma + \tau$ should be translated into $\J(\sigma^\J + \tau^\J)$. But the simple notion of a nucleus given by a type and two terms does not suffice. We have to work with the following more general notion: A nucleus $(\J,\eta,\kappa)$ consists of an endofunction~$\J$ on types of (the extension of) $\T$, and terms
\[
\eta : \sigma \to \J\sigma
\qquad
\kappa : (\sigma \to \J\tau) \to \J\sigma \to \J\tau
\]
for any types $\sigma,\tau$ of (the extension of) $\T$. Then we add the following clauses to Definition~\ref{def:trans} to complete the translation:
\[
\begin{aligned}
(\sigma + \tau)^\J & \eqdef \J(\sigma^\J + \tau^\J) & \quad & &
{\inj_i}^\J & \eqdef \eta \circ \inj_i \qquad \text{for } i \in \{1,2\} \\
\ke_{\sigma + \tau} & \eqdef \kappa & & &
\case^\J & \eqdef \lambda f g. \ke(\case(f,g)).
\end{aligned}
\]
Generalizing the fundamental theorem of logical relation to cover sums is remained as a future task.
Both natural numbers and sums, as instances of inductive types, are translated in a very similar way. Another future task is to generalize the translation to cover all (strictly positive) inductive and coinductive types.

\subsection{The Kolmogorov-style monadic translation}
\label{sec:kolmogorov}


Barthe and Uustalu's call-by-name continuation passing style transformation~\cite{BU:CPS} corresponds to Kolmogorov's negative translation. By replacing the continuation monad with a nucleus, one obtains a Kolmogorov-style monadic translation which is studied in~\cite{uustalu:monad}. Kolmogorov places a double negation in front of every subformula. Similarly we place the nucleus in front of every subtype. Hence we have to work with the more general notion of a nucleus $(\J,\eta,\kappa)$ where $\J$ is an endofunction on types. Specifically, each type $\rho$ of $\T$ is translated to $\J\ko{\rho}$ where $\ko{\rho}$ is defined inductively as follows
\[
\begin{aligned}
\ko{\N} & \eqdef \N \\
\ko{\sigma \; \Box \; \tau} & \eqdef \J\ko{\sigma} \; \Box \; \J\ko{\tau} &
\quad \text{for } \Box \in \{ \to, \times, + \}.
\end{aligned}
\]
Each term $t:\rho$ is translated to a term $\ko{t}:\J\ko{\rho}$. In order to translate function application, we have to consider a monadic notion of application. Given $f:\J(\sigma \to \J\tau)$ and $a:\sigma$, we `apply' $f$ to $a$ as
\[
f \diamond a \eqdef \kappa(\lambda g^{\sigma \to \J\tau}.ga ,\, f).
\]
Then for any terms $t:\sigma \to \tau$ and $u:\sigma$, we define $\ko{tu} \eqdef \ko{t} \diamond \ko{u}$. The rest of the translation can be found in the appendix.

\subsection{The Kuroda-style monadic translation}
\label{sec:kuroda}

There is also a Kuroda-style monadic translation of System~$\T$ which has been studied by Powell in~\cite{powell:fi,powell:ccht}, where each type $\rho$ is translated to $\J\ku{\rho}$ with $\ku{\rho}$ defined by
\[
\begin{aligned}
\ku{\N} & \eqdef \N &&\qquad&
\ku{\sigma \times \tau} & \eqdef \ku{\sigma} \times \ku{\tau} \\
\ku{\sigma \to \tau} & \eqdef \ku{\sigma} \to \J\ku{\tau} &&&
\ku{\sigma + \tau} & \eqdef \ku{\sigma} + \ku{\tau}.
\end{aligned}
\]
Note that it actually corresponds to the variant of Kuroda's negative translation where double negations are placed also in front of conclusions of implications (see~\cite[Section~6]{FO:negative}). Here we need another notion of application for the term translation: Given $f : \J(\sigma \to \J\tau)$ and $a:\J\sigma$, we `apply' $f$ to $a$ as
\[
f \bullet a \eqdef \kappa( \lambda g^{\sigma \to \J\tau}.\kappa(g,a) ,\, f).
\]
The complete translation can be found in the appendix.

Powell makes use of the Kuroda-style translation to extract moduli of continuity~\cite[Section~5]{powell:ccht}, similarly to our development in Section~\ref{sec:cont}. However, our algorithms are different because the Kuroda-style translation is call-by-value while our Gentzen-style one is call-by-name. Consider the following example. Let $t = \lambda \alpha. \rec(\alpha 0, \lambda nm.0, 1) : \Baire \to \N$ which is a constant function. If we apply the Kuroda-style translation to $t$ with the continuity nucleus (similar to the one given in Section~\ref{sec:cont} but generalized to arbitrary types of $\T$), then we get a modulus $\lambda \alpha.1$, because in the call-by-value strategy all the inputs of $\rec$ including $\alpha 0$ are evaluated. With the Gentzen-style translation, we get $\lambda \alpha.0$ because $\alpha0$ is never evaluated in the call-by-name strategy.

\bibliography{gmt-bib}

\appendix

\section{The Kolmogorov- and Kuroda-style monadic translations of T}
We present the Kolmogorov- and Kuroda-style monadic translations of System~$\T$ extended with products and sums, parametrized by a nucleus $(\J,\eta,\kappa)$, where $\J$ is an endofunction on types of $\T$ and $\eta : \sigma \to \J\sigma$ and $\kappa : (\sigma \to \J\tau) \to \J\sigma \to \J\tau$ are terms of $\T$. Recall the following notions of monadic application that will be needed in the translations:
\begin{itemize}
\item Given $f:\J(\sigma \to \J\tau)$ and $a:\sigma$, we define 
\[
f \diamond a \eqdef \kappa(\lambda g^{\sigma \to \J\tau}.ga ,\, f).
\]
\item Given $f:\J(\sigma \to \J\tau)$ and $a:\J\sigma$, we define
\[
f \bullet a \eqdef \kappa( \lambda g^{\sigma \to \J\tau}.\kappa(g,a) ,\, f).
\]
\end{itemize}

\begin{definition}[Kolmogorov-style monadic translation]
We assign to each type $\rho$ a type $\J\ko{\rho}$ where $\ko{\rho}$ is defined as follows
\[
\begin{aligned}
\ko{\N} & \eqdef \N \\
\ko{\sigma \; \Box \; \tau} & \eqdef \J\ko{\sigma} \; \Box \; \J\ko{\tau} &
\quad \text{for } \Box \in \{ \to, \times, + \}.
\end{aligned}
\]
Each term $\Gamma \vdash t:\rho$ is translated to a term $\ko{\Gamma} \vdash \ko{t}:\J\ko{\rho}$, where $\ko{\Gamma}$ is a new context assigning each $x:\sigma \in \Gamma$ to a fresh variable $\hat{x}:\J\ko{\sigma}$, and $\ko{t}$ is defined inductively as follows:
\[
\begin{aligned}
\ko{x} & \eqdef \hat{x} & \qquad & &
\ko{0} & \eqdef \eta(0)& \\
\ko{\lambda x.t} & \eqdef \eta(\lambda \hat{x}. \ko{t}) & \qquad & &
\ko{\suc} & \eqdef \eta(\kappa(\eta \circ \suc)) \\
\ko{tu} & \eqdef \ko{t} \diamond \ko{u} & \qquad & &
\ko{\rec} & \eqdef \eta(\kappa(\lambda a. \eta(\kappa(\lambda f. \eta(\kappa( \rec^\diamond(a,f) )))))) \\
\ko{\tpair} & \eqdef \eta(\lambda a. \eta(\lambda b. \eta (\tpair(a,b)))) & \qquad & &
\ko{\inj_i} & \eqdef \eta(\eta \circ \inj_i) \\
\ko{\pr_i} & \eqdef \eta(\kappa(\pr_i)) & \qquad & &
\ko{\case} & \eqdef \eta(\kappa(\lambda f. \eta(\kappa(\lambda g. \eta(\kappa(\case(f,g)))))))
\end{aligned}
\]
where $\rec^\diamond : \sigma \to (\JN \to \J(\J\sigma \to \J\sigma)) \to \N \to \J\sigma$ is defined by
\[
\begin{aligned}
\rec^\diamond(a,f,0) & \eqdef \eta(a) \\
\rec^\diamond(a,f,n+1) & \eqdef f(\eta n) \diamond \rec^\diamond(a,f,n).
\end{aligned}
\]
\end{definition}

\begin{definition}[Kuroda-style monadic translation]
We assign to each type $\rho$ a type $\J\ku{\rho}$ where $\ku{\rho}$ is defined as follows
\[
\begin{aligned}
\ku{\N} & \eqdef \N &&\qquad&
\ku{\sigma \times \tau} & \eqdef \ku{\sigma} \times \ku{\tau} \\
\ku{\sigma \to \tau} & \eqdef \ku{\sigma} \to \J\ku{\tau} &&&
\ku{\sigma + \tau} & \eqdef \ku{\sigma} + \ku{\tau}.
\end{aligned}
\]
Each term $\Gamma \vdash t:\rho$ is translated to a term $\ku{\Gamma} \vdash \ku{t}:\J\ku{\rho}$, where $\ku{\Gamma}$ is a new context assigning each $x:\sigma \in \Gamma$ to a fresh variable $\bar{x}:\ku{\sigma}$, and $\ku{t}$ is defined inductively as follows:
\[
\begin{aligned}
\ku{x} & \eqdef \eta(\bar{x}) & \qquad & &
\ku{0} & \eqdef \eta(0) \\
\ku{\lambda x.t} & \eqdef \eta(\lambda \bar{x}. \ku{t}) & \qquad & &
\ku{\suc} & \eqdef \eta(\eta \circ \suc) \\
\ku{tu} & \eqdef \ku{t} \bullet \ku{u} & \qquad & &
\ku{\rec} & \eqdef \eta(\lambda a. \eta(\lambda f. \eta (\rec^\bullet(a,f)))) \\
\ku{\tpair} & \eqdef \eta(\lambda \alpha. \eta(\lambda b. \eta (\tpair(a,b)))) & \qquad & &
\ku{\inj_i} & \eqdef \eta(\eta \circ \inj_i) \\
\ku{\pr_i} & \eqdef \eta(\eta \circ \pr_i) & \qquad & &
\ku{\case} & \eqdef \eta(\lambda f. \eta(\lambda g. \eta (\case(f,g))))
\end{aligned}
\]
where $\rec^\bullet : \sigma \to (\N \to \J(\sigma \to \J\sigma)) \to \N \to \J\sigma$ is defined by
\[
\begin{aligned}
\rec^\bullet(a,f,0) & \eqdef \eta(a) \\
\rec^\bullet(a,f,n+1) & \eqdef fn \bullet \rec^\bullet(a,f,n).
\end{aligned}
\]
\end{definition}

\end{document}